\documentclass[12pt,a4paper]{article}
\usepackage[T1]{fontenc}
\usepackage[utf8]{inputenc}
\usepackage{lmodern,mathptmx,amsmath,amssymb,amsthm,mathtools,graphicx,booktabs}
\usepackage[a4paper,margin=2cm]{geometry}
\usepackage[numbers,square]{natbib}
\usepackage{hyperref}
\usepackage{microtype}
\hypersetup{pdftitle={The LR1 Lifetime Distribution: Properties, Nonmonotone Hazard, and Statistical Inference},pdfauthor={Lalita Singh and Rohit Patna}}
\numberwithin{equation}{section}
\newtheorem{theorem}{Theorem}
\newtheorem{proposition}[theorem]{Proposition}
\newtheorem{corollary}[theorem]{Corollary}

\newcommand{\E}{\mathbb E}
\newcommand{\Prob}{\mathbb P}

\newcommand{\Exp}{\operatorname{Exp}}
\newcommand{\Gam}{\operatorname{Gamma}}
\newcommand{\dd}{\mathrm d}

\title{The LR1 Lifetime Distribution: Properties, Nonmonotone Hazard, and Statistical Inference}
\author{%
  Lalita Singh\thanks{Corresponding author: mail}\\
  Agra College, Agra
  \and
  Rohit Patawa\\
  Agra College, Agra
}
\date{}
\makeatletter
\begin{document}

\maketitle
\begin{abstract}
The LR1 distribution links an exponential lifetime and a four-stage gamma lifetime through a single positive parameter $\lambda$. Their rates, $2\lambda$ and $\lambda/2$, have a fixed ratio, whereas the exponential mixing probability $\lambda^2/(1+\lambda^2)$ changes with the parameter. This restriction produces a hazard that decreases near zero and eventually increases toward $\lambda/2$ for every $\lambda>0$. We establish this behavior analytically and obtain an explicit parameter threshold separating decreasing densities from densities with an interior local peak. Distributional calculations provide survival probabilities, moments, transforms, truncated moments, quantiles, order statistics, and reliability summaries. The density at zero establishes identifiability, and strict decrease of the mean ensures a unique moment estimate for every positive sample mean. Likelihood inference includes the score, observed information, a constrained expectation--maximization algorithm, and asymptotic confidence limits; spacing estimation and the likelihood under independent right censoring are also described. Eight Monte Carlo settings, comprising 4,000 generated datasets, quantify the error of likelihood and moment estimators and the coverage of log-scale intervals. The results apply in a fixed working time unit and concern mathematical properties and correctly specified simulation, rather than demonstrated superiority on observed data.
\end{abstract}
\noindent\textbf{Keywords:} finite mixture; gamma distribution; lifetime model; nonmonotone hazard; maximum likelihood; moment estimation.\\
\textbf{Mathematics Subject Classification:} 62E10, 62F10, 62N05.

\section{Introduction}
An exponential waiting time and a sequence of exponential stages represent different mechanisms for a positive duration. The former permits failures immediately after the origin and has a constant hazard; a gamma duration with integer shape describes the accumulated time through several stages. A mixture of these mechanisms can accommodate heterogeneity while retaining simple component distributions. The resulting population failure rate nevertheless requires its own analysis, because the composition of the surviving population changes with age. This paper examines that issue for a constrained exponential--gamma distribution in which the gamma shape is four.

The restriction defining LR1 is particularly concrete. For $\lambda>0$, the exponential rate is $2\lambda$, the gamma rate is $\lambda/2$, and the probability of the exponential component is $\lambda^2/(1+\lambda^2)$. Consequently, the component-rate ratio remains four and the gamma component mean is sixteen times the exponential component mean. Increasing $\lambda$ simultaneously shortens both component lifetimes and changes the population composition in favor of the exponential mechanism. The model is therefore a specified relationship between frequencies and durations, rather than a mixture with three freely adjustable parameters. The notation $\lambda$ denotes the same LR1 parameter previously written as $\theta$; this is a change of symbol, not a reparameterization or a change of distribution.

Such constructions belong to the established literature on finite mixtures. Component allocations may describe latent population groups, or they may simply provide a convenient representation of an observable density. The distinction matters when interpreting fitted components. The review of \citet{mclachlan2019} describes the broader mixture framework and its inferential challenges, including identification and numerical likelihood maximization. For lifetime data, survival probabilities, hazards, and conditional remaining lifetimes address different questions; the framework of \citet{cox1984} supplies the relevant survival-analysis context. A parsimonious formula is useful only insofar as its restrictions and the corresponding reliability behavior are understood.

Exponential--gamma mixtures already appear in familiar lifetime distributions. The Lindley distribution of \citet{lindley1958} combines exponential and shape-two gamma components with a common rate. Its distributional and reliability analysis was developed by \citet{ghitany2008}. Further flexibility can arise through transformations and additional parameters, as illustrated by the power Lindley distribution of \citet{ghitany2013}. These examples motivate analytical study of particular restrictions without implying that the underlying act of mixing components is new.

Other lifetime constructions modify the gamma shape or the allocation probabilities. The xgamma model of \citet{sen2016} combines exponential and shape-three gamma components, while \citet{shanker2016} develops the Sujatha distribution through exponential and gamma ingredients. Separate component rates provide another choice: \citet{sakthivel2023} considers an exponential--gamma mixture with a shape-three gamma component and distinct rate parameters. These alternatives show that component shapes, rate restrictions, and weights must all be specified before comparing the resulting distributions. Properties of one member of this broader class cannot be transferred to another merely because both contain gamma components.

LR1 is also related directly to the LR2 model of \citet{singh2026lr2}. To keep the parameter conventions distinct, write the LR2 parameter as $\vartheta$: its component rates are $\vartheta/2$ and $\vartheta^2$, with exponential probability $\vartheta^2/(1+\vartheta^2)$. Those rates cross at $\vartheta=1/2$, whereas LR1 maintains its fixed ratio at every $\lambda>0$. Both models use shape-four gamma components, quadratic allocation probabilities, and several common analytical methods. The present exposition retains that relationship explicitly and concentrates on the consequences of the LR1 restriction. Renaming a parameter does not alter these shared foundations.

The fixed rate ratio gives LR1 a consistent tail mechanism: the gamma survival component dominates at every positive parameter. Near the origin, however, the exponential component has positive density while the gamma density vanishes to third order. We prove that the combination forces an initial hazard decrease, followed eventually by an increase toward $\lambda/2$. Thus the hazard is nonmonotone throughout the parameter space. This endpoint statement does not assert that there is exactly one minimum. A separate density analysis is sharper: after expressing time in a dimensionless coordinate, a single explicit threshold in $\lambda$ determines whether the density has an interior local peak.

Inference has an equally specific structure. The density at zero is strictly increasing in $\lambda$, establishing identifiability. The mean decreases continuously from infinity to zero, so matching it to a positive sample mean yields exactly one moment estimate. Likelihood maximization remains a numerical problem and need not have a unique maximizer. We develop its score and observed information while accounting for unobserved component allocations. The expectation--maximization construction follows \citet{dempster1977}; the spacing alternative follows \citet{cheng1983}. Regular likelihood arguments give asymptotic inference on compact intervals, and log-scale confidence limits respect positivity.

The contribution is a model-specific account of distributional shape, reliability, and estimation under a fixed component-rate contrast. Explicit expressions cover transforms, moments, truncated quantities, quantiles, order statistics, residual-life summaries, and stress--strength reliability. The numerical study evaluates likelihood and moment estimation using eight settings from the original combined simulation experiment. No observed lifetime dataset has been analyzed, and the results do not establish superiority over other lifetime distributions. A four-stage interpretation of the gamma component remains a possible substantive explanation, requiring evidence in any proposed application.

Gamma notation throughout is shape and rate. All quantities refer to a declared working time unit: even though both component rates are proportional to $\lambda$, the parameter-dependent weight prevents the one-parameter family from being closed under arbitrary changes of that unit. The discussion describes a separate scale extension and its limits. ``LR1'' remains a working distribution label and does not denote a likelihood ratio.

Section~\ref{sec:construction} specifies the distribution and proves its basic validity and identifiability. Section~\ref{sec:shape} develops the density threshold and hazard results. Section~\ref{sec:moments} derives moments and associated distributional summaries, and Section~\ref{sec:reliability} treats reliability measures. Estimation and inference appear in Section~\ref{sec:inference}, followed by the numerical study in Section~\ref{sec:numerical} and discussion in Section~\ref{sec:discussion}. The appendix documents exact sampling and computational reproduction.

\section{The fixed-rate-ratio construction}\label{sec:construction}
Let $x>0$ denote a lifetime in the chosen working unit and let $\lambda>0$ be the LR1 parameter. Let $\Exp(a)$ have rate $a$, and let $\Gam(4,b)$ have shape four and rate $b$. Their densities are
\begin{equation}
 e_a(x)=a e^{-ax},\qquad g_b(x)=\frac{b^4x^3}{6}e^{-bx}.
\end{equation}
Define $w=w(\lambda)=\lambda^2/(1+\lambda^2)$ and let $Z$ be Bernoulli with success probability $w$. Conditional on $Z$, take
\begin{equation}\label{eq:latent}
 X\mid Z=1\sim\Exp(a(\lambda)),\qquad X\mid Z=0\sim\Gam(4,b(\lambda)).
\end{equation}
Thus the common density is
\begin{equation}\label{eq:generalpdf}
 f(x;\lambda)=w a e^{-ax}+(1-w)\frac{b^4x^3}{6}e^{-bx}.
\end{equation}
Hereafter, the auxiliary symbols always mean $a=2\lambda$ and $b=\lambda/2$; they are not independently estimated parameters. The LR1 density is explicitly
\begin{equation}\label{eq:specificpdf}
 f(x;\lambda)=\frac{2\lambda^3e^{-2\lambda x}+\lambda^4x^3e^{-\lambda x/2}/96}{1+\lambda^2},\qquad x>0.
\end{equation}

\begin{proposition}[Validity and survival function]\label{prop:valid}
LR1 is a probability distribution on $(0,\infty)$. Define
\begin{equation}
 P_3(y)=1+y+\frac{y^2}{2}+\frac{y^3}{6}.
\end{equation}
The survival function and distribution function are
\begin{equation}\label{eq:survival}
 S(x;\lambda)=w e^{-ax}+(1-w)e^{-bx}P_3(bx),\qquad F(x;\lambda)=1-S(x;\lambda).
\end{equation}
For $x\leq0$, $F(x;\lambda)=0$ and $S(x;\lambda)=1$.
\end{proposition}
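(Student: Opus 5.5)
The plan is to work directly from the mixture form \eqref{eq:generalpdf}, treating LR1 as a convex combination of two standard densities. First, for every $\lambda>0$ we have $a=2\lambda>0$ and $b=\lambda/2>0$, so $e_a$ and $g_b$ are strictly positive on $(0,\infty)$. Also $w=\lambda^2/(1+\lambda^2)\in(0,1)$. Hence $f(x;\lambda)>0$ for $x>0$, and $f$ is measurable (indeed continuous) there. We extend $f$ by zero on $(-\infty,0]$. Normalization then follows from linearity: $\int_0^\infty e_a=1$ is elementary, and $\int_0^\infty g_b=b^4\Gamma(4)/(6b^4)=1$ using $\int_0^\infty x^3e^{-bx}\,\dd x=3!/b^4$. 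So $\int f=w+(1-w)=1$. As a cross-check I would confirm that \eqref{eq:specificpdf} agrees with \eqref{eq:generalpdf} after substitution: $wa=2\lambda^3/(1+\lambda^2)$, and $(1-w)b^4/6=\lambda^4/(96(1+\lambda^2))$.

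For the survival function, again use linearity: $S(x)=w\int_x^\infty e_a+(1-w)\int_x^\infty g_b$. The first term is $e^{-ax}$. For the gamma tail, the key identity is
\begin{equation*}
\int_x^\infty \frac{b^4t^3}{6}e^{-bt}\,\dd t=e^{-bx}P_3(bx),\qquad x\ge0.
\end{equation*}
I would prove this by differentiation rather than repeated integration by parts. Since $P_3'(y)=P_3(y)-y^3/6$,
\begin{equation*}
\frac{\dd}{\dd x}\bigl[e^{-bx}P_3(bx)\bigr]=-b\,e^{-bx}\bigl(P_3(bx)-P_3'(bx)\bigr)=-\frac{b^4x^3}{6}e^{-bx}.
\end{equation*}
Both sides of the identity tend to $0$ as $x\to\infty$, because polynomial growth is dominated by exponential decay. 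The fundamental theorem of calculus then gives the identity. An equivalent route is the Poisson–Erlang relation: $\Prob(\Gam(4,b)>x)$ equals the probability that a Poisson process of rate $b$ has at most three events in $[0,x]$.

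The remaining claims are bookkeeping. $F=1-S$ by definition. At $x=0$ we get $S(0)=w+(1-w)P_3(0)=1$, so $F(0)=0$. For $x\le0$, the density vanishes, so $F(x)=0$ and $S(x)=1$. This is continuous with the formula at $0$. One might also record that $S$ is nonincreasing with limit $0$ at infinity, which confirms a proper distribution.

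No step is a genuine obstacle. The only point needing care is the gamma tail identity and its boundary terms, which the derivative argument above handles cleanly.
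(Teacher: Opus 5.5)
Your proposal is correct and follows the paper's proof: normalization by integrating the two mixture components, the gamma tail $e^{-bx}P_3(bx)$, and combining the component survival functions linearly (the paper phrases this as conditioning on $Z$). The only difference is that you prove the gamma tail identity by differentiating $e^{-bx}P_3(bx)$ and applying the fundamental theorem of calculus, whereas the paper uses repeated integration by parts; both are routine and equivalent.
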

\begin{proof}
Normalization follows immediately by integrating the mixture: each component contributes its allocation probability, and the two probabilities sum to one. Repeated integration by parts gives the gamma survival function $e^{-bx}P_3(bx)$, and conditioning on $Z$ gives \eqref{eq:survival}.
\end{proof}

For this model, the explicit survival function is
\begin{equation}\label{eq:specificsurvival}
 S(x;\lambda)=\frac{\lambda^2 e^{-2\lambda x}+e^{-\lambda x/2}P_3(\lambda x/2)}{1+\lambda^2}.
\end{equation}
The cumulative hazard is $H(x;\lambda)=-\log S(x;\lambda)$.

\begin{proposition}[Identifiability within a specification]\label{prop:ident}
For LR1, distinct positive parameters determine distinct distributions.
\end{proposition}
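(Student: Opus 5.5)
The plan is to exhibit a single functional of the distribution that is strictly monotone in $\lambda$; the introduction already points to the right one, the density at the origin. From \eqref{eq:specificpdf}, the gamma term carries the factor $x^3$ and therefore vanishes at $x=0$. This leaves
\begin{equation*}
 f(0^+;\lambda)=\lim_{x\downarrow 0}f(x;\lambda)=\frac{2\lambda^3}{1+\lambda^2}.
\end{equation*}
Before using this, I will make sure the quantity is determined by the distribution and not merely by a chosen version of the density. Since $f(\cdot;\lambda)$ is continuous on $(0,\infty)$ and has a finite right limit at $0$, we can write
\begin{equation*}
 f(0^+;\lambda)=\lim_{x\downarrow0}F(x;\lambda)/x,
\end{equation*}
by the mean value theorem applied to $F$ from Proposition~\ref{prop:valid}. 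Equivalently, it is $\lim_{x\downarrow 0}(1-S(x;\lambda))/x$ computed from \eqref{eq:specificsurvival}. Hence two parameters that give the same distribution function give the same value of $f(0^+;\cdot)$.

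Next I will show that $h(\lambda)=2\lambda^3/(1+\lambda^2)$ is strictly increasing on $(0,\infty)$. Differentiating gives
\begin{equation*}
 h'(\lambda)=\frac{6\lambda^2(1+\lambda^2)-4\lambda^4}{(1+\lambda^2)^2}=\frac{2\lambda^2(3+\lambda^2)}{(1+\lambda^2)^2}>0 .
\end{equation*}
A derivative-free alternative is to write $h(\lambda)=2\lambda\cdot\lambda^2/(1+\lambda^2)$, a product of two positive, strictly increasing factors. Either way, $h$ is injective.

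To conclude, suppose $\lambda_1\neq\lambda_2$ induce the same distribution. Then their distribution functions coincide, so their right-limit densities at zero coincide, which forces $h(\lambda_1)=h(\lambda_2)$ and contradicts injectivity.

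The step needing the most care is the first: justifying that the density at zero is a genuine distributional invariant, since densities are only defined almost everywhere. Routing the argument through $F$ and the limit of $F(x)/x$ handles this. As a backup, one could compare tail rates instead. Because $b<a$, we have $S(x;\lambda)\sim(1-w)(bx)^3e^{-bx}/6$, so $-\log S(x)/x\to\lambda/2$, which also determines $\lambda$.
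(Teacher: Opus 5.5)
Your proof is correct and is the paper's argument: both rest on the strictly increasing map $\lambda\mapsto f(0+;\lambda)=2\lambda^3/(1+\lambda^2)$, with the same derivative $2\lambda^2(3+\lambda^2)/(1+\lambda^2)^2$. Your step through $\lim_{x\downarrow0}F(x)/x$ is a more explicit version of the paper's remark that equal distributions have equal continuous densities on the support and therefore equal limits at zero.
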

\begin{proof}
The continuous extension of the density to zero is
\begin{equation}\label{eq:origin}
 f(0+;\lambda)=wa=\frac{2\lambda^3}{1+\lambda^2},\qquad
 \frac{\dd}{\dd\lambda}f(0+;\lambda)=\frac{2\lambda^2(3+\lambda^2)}{(1+\lambda^2)^2}>0.
\end{equation}
Equal distributions have equal continuous densities on the positive support and therefore equal limits at zero. Strict monotonicity identifies $\lambda$.
\end{proof}

\section{Shape consequences of the LR1 restriction}\label{sec:shape}
\subsection{An explicit threshold for the density}
For LR1 the boundary behavior alone does not determine the density shape. The density starts with a decrease, but a subsequent increase and decrease can create a second, interior peak. The following criterion separates these possibilities.
\begin{theorem}[Density-shape threshold]\label{thm:mode}
Fix $0<w<1$ and $a,b>0$ in \eqref{eq:generalpdf}. For $0<x<3/b$, define
\begin{equation}\label{eq:Rshape}
 R(x)=\frac{(1-w)b^4}{6wa^2}x^2(3-bx)e^{(a-b)x}.
\end{equation}
This function has a unique maximizer $x_\star\in(0,3/b)$, characterized by
\begin{equation}\label{eq:xstar}
 \frac2{x_\star}-\frac b{3-bx_\star}+(a-b)=0.
\end{equation}
If $R(x_\star)<1$, the density is strictly decreasing. If $R(x_\star)=1$, it remains strictly decreasing but has one horizontal tangent. If $R(x_\star)>1$, the density has exactly two stationary points: a local minimum followed by a local maximum. No additional stationary points occur.
\end{theorem}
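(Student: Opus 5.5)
The plan is to reduce the sign of $f'$ to a comparison of $R$ with $1$, and then show that $R$ is unimodal on $(0,3/b)$. Differentiating \eqref{eq:generalpdf} gives
\begin{equation*}
 f'(x)=-wa^2e^{-ax}+(1-w)\frac{b^4}{6}\,x^2(3-bx)\,e^{-bx}.
\end{equation*}
For $x\ge 3/b$ the second term is nonpositive, so $f'(x)<0$ there. For $0<x<3/b$ I factor out the positive quantity $wa^2e^{-ax}$, which gives the identity
\begin{equation*}
 f'(x)=wa^2e^{-ax}\bigl(R(x)-1\bigr).
\end{equation*}
Hence every stationary point lies in $(0,3/b)$, and the sign of $f'$ there is the sign of $R-1$.

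Next I analyze $\log R$ on $(0,3/b)$:
\begin{equation*}
 \log R(x)=\text{const}+2\log x+\log(3-bx)+(a-b)x.
\end{equation*}
Its derivative is $2/x-b/(3-bx)+(a-b)$, and its second derivative is $-2/x^2-b^2/(3-bx)^2<0$. So $\log R$ is strictly concave. The first derivative tends to $+\infty$ as $x\downarrow0$ and to $-\infty$ as $x\uparrow3/b$, and no sign condition on $a-b$ is needed for this. The first derivative is strictly decreasing and continuous, so it has exactly one zero $x_\star$, which is characterized by \eqref{eq:xstar}. Therefore $R$ is strictly increasing on $(0,x_\star]$ and strictly decreasing on $[x_\star,3/b)$, and $x_\star$ is its unique maximizer. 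Also $R(0+)=0$ and $R(3/b-)=0$.

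The three cases then follow from this unimodality.
\begin{itemize}
\item If $R(x_\star)<1$, then $R<1$ throughout, so $f'<0$ everywhere.
\item If $R(x_\star)=1$, then $f'$ vanishes only at $x_\star$ and is negative elsewhere. A continuous function whose derivative is negative except at an isolated point is still strictly decreasing, since integrating $f'$ over any interval gives a negative value. The density therefore has exactly one horizontal tangent.
\item If $R(x_\star)>1$, strict monotonicity on each side of $x_\star$, together with the boundary limits $0$, gives exactly one $x_1\in(0,x_\star)$ and exactly one $x_2\in(x_\star,3/b)$ with $R=1$. Then $f'$ is negative on $(0,x_1)$, positive on $(x_1,x_2)$, and negative afterwards. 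So $x_1$ is a local minimum, $x_2$ is a local maximum, and there are no other stationary points.
\end{itemize}

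I do not expect a serious obstacle. The only points needing care are these:
\begin{itemize}
\item checking that the factorization of $f'$ is exact;
\item noting that the region $x\ge3/b$ contributes no zeros of $f'$;
\item the strict-decrease argument in the tangency case.
\end{itemize}
The concavity of $\log R$ is the key step, and it is immediate once logarithms are taken.
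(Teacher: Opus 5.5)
Your proposal is correct and follows essentially the same route as the paper's proof: you factor $f'(x)=wa^2e^{-ax}\{R(x)-1\}$ on $(0,3/b)$, note that $f'<0$ for $x\ge 3/b$, use strict concavity of $\log R$ with its derivative decreasing from $+\infty$ to $-\infty$ together with zero endpoint limits of $R$, and then compare $R(x_\star)$ with one. Your explicit integration argument for strict decrease in the tangency case fills in a detail that the paper leaves implicit.
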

\begin{proof}
Differentiation yields
\begin{equation}\label{eq:pdfderivative}
 f'(x)=-wa^2e^{-ax}+(1-w)\frac{b^4}{6}x^2(3-bx)e^{-bx}.
\end{equation}
For $x\geq3/b$, both terms make $f'(x)<0$. On $(0,3/b)$, its sign is the sign of $R(x)-1$. Moreover,
\[
 (\log R)'(x)=\frac2x-\frac b{3-bx}+a-b,\qquad
 (\log R)''(x)=-\frac2{x^2}-\frac{b^2}{(3-bx)^2}<0.
\]
The first derivative decreases continuously from $+\infty$ to $-\infty$. Hence $R$ increases to a unique maximum and then decreases, with zero limits at both interval endpoints. Comparing that maximum with one proves all three cases and the signs on either side of each stationary point.
\end{proof}

For LR1, putting $y=bx=\lambda x/2$ reduces the threshold function to
\begin{equation}
 R(x)=\frac{y^2(3-y)e^{3y}}{96\lambda^2},\qquad 0<y<3.
\end{equation}
Equation~\eqref{eq:xstar} gives $y_\star=1+\sqrt3$ and $x_\star=2(1+\sqrt3)/\lambda$. Define
\begin{equation}
 \lambda_c=\left\{\frac{(1+\sqrt3)^2(2-\sqrt3)e^{3(1+\sqrt3)}}{96}\right\}^{1/2}.
\end{equation}
Then the density has a local minimum and an interior local maximum for $0<\lambda<\lambda_c$, is strictly decreasing with one horizontal tangent at $\lambda=\lambda_c$, and is strictly decreasing without stationary points when $\lambda>\lambda_c$. This classifies density stationary points, rather than hazard turning points.

\subsection{Initial decline and eventual recovery of the hazard}
The hazard is
\begin{equation}\label{eq:hazard}
 h(x;\lambda)=\frac{wae^{-ax}+(1-w)b^4x^3e^{-bx}/6}
 {we^{-ax}+(1-w)e^{-bx}P_3(bx)}.
\end{equation}
It is also a survival-weighted average of the component hazards:
\begin{equation}\label{eq:hazardmix}
 h(x)=\pi(x)a+\{1-\pi(x)\}h_G(x),\qquad
 \pi(x)=\frac{we^{-ax}}{S(x)},\quad
 h_G(x)=\frac{b^4x^3}{6P_3(bx)}.
\end{equation}
Here the allocation probabilities are conditional on surviving to time $x$. They therefore change with age, even though the original mixture probabilities are fixed for a given parameter. The increasing gamma-component hazard consequently does not imply an increasing population hazard.

\begin{theorem}[Initial and limiting hazard]\label{thm:hazard}
For LR1 and every $\lambda>0$,
\begin{equation}\label{eq:hzero}
 h(0+)=wa,\qquad h'(0+)=-w(1-w)a^2<0.
\end{equation}
Furthermore,
\begin{equation}\label{eq:htail}
 \lim_{x\to\infty}h(x)=\min(a,b).
\end{equation}
If $b\leq a$, then
\begin{equation}\label{eq:gammatail}
 h(x)=b-\frac3x+O(x^{-2}),\qquad
 h'(x)=\frac3{x^2}+O(x^{-3}),
\end{equation}
so the hazard eventually increases and cannot be globally monotone. If $a<b$, then
\begin{equation}\label{eq:exptail}
 h(x)-a\sim\frac{1-w}{w}\frac{b^3(b-a)}6x^3e^{-(b-a)x},
\end{equation}
and the hazard is eventually decreasing toward $a$ from above.
\end{theorem}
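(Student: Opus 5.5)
The plan is to work directly with the general form \eqref{eq:hazard}, treating $w,a,b$ as fixed positive constants, and to specialize to LR1 only at the end. For LR1, $b=\lambda/2<2\lambda=a$, so the case $b\le a$ is the one that applies for every $\lambda>0$.

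\textbf{Behaviour at the origin.} First I would expand the numerator and denominator of \eqref{eq:hazard} as power series at $x=0$. The gamma density $(1-w)b^4x^3e^{-bx}/6$ is $O(x^3)$. The gamma survival term satisfies $e^{-bx}P_3(bx)=1-O(x^4)$, because $e^{-y}P_3(y)$ is the Poisson tail $1-\sum_{k\ge4}y^ke^{-y}/k!$. Hence, up to $O(x^2)$,
\[
f(x)=wa-wa^2x+O(x^2),\qquad S(x)=1-wax+O(x^2),
\]
which gives
\[
h(x)=f(x)/S(x)=wa+(w^2a^2-wa^2)x+O(x^2).
\]
This yields $h(0+)=wa$ and $h'(0+)=-w(1-w)a^2$. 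Strictly, $h'(0+)$ means the right derivative of the continuous extension. Because $h$ is analytic at $0$ (a ratio of entire functions with $S(0)=1$), the one-sided derivative and the limit of $h'$ agree.

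\textbf{Tail limit.} Next I would use the representation \eqref{eq:hazardmix}. Here $h_G(x)=b^4x^3/(6P_3(bx))\to b$, since $P_3(y)\sim y^3/6$. The weight $\pi(x)=we^{-ax}/S(x)$ tends to $0$ when $a>b$, because $S(x)\ge(1-w)e^{-bx}P_3(bx)$. When $a<b$ it tends to $1$, because the gamma survival term is $e^{-bx}\times$ polynomial, which is $o(e^{-ax})$. The case $a=b$ gives $\pi\to0$ as well, owing to the cubic polynomial factor. Together these give $\lim_{x\to\infty}h(x)=\min(a,b)$.

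\textbf{Case $b\le a$.} I would write
\[
h=\frac{(1-w)g_b+we^{-ax}a}{(1-w)e^{-bx}P_3(bx)+we^{-ax}}.
\]
The exponential contributions are relatively $O(e^{-(a-b)x}x^{-3})$, which is smaller than any power of $x$ when $a>b$. When $a=b$ they are $O(x^{-3})$, which is still absorbed into the $O(x^{-2})$ remainder. So $h=h_G+O(x^{-3})$, with the corresponding derivative estimate. Then
\[
h_G(x)=b\,\frac{y^3/6}{P_3(y)},\qquad y=bx,
\]
and $\dfrac{y^3/6}{y^3/6+y^2/2+\cdots}=1-3/y+O(y^{-2})$. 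This gives $h=b-3/x+O(x^{-2})$.

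\textbf{The main obstacle.} I expect the derivative statement to be the hardest step, because an asymptotic expansion cannot simply be differentiated term by term. I would avoid doing so by noting that $h_G$ is a rational function of $x$, so its expansion in powers of $1/x$ can be differentiated legitimately: $h_G'=3/x^2+O(x^{-3})$. For the correction term, I would differentiate $h-h_G$ explicitly. Its numerator and denominator are sums of exponential-polynomial terms, so the derivative is again $O(e^{-(a-b)x}\,\mathrm{poly}(x))$ for $a>b$, or a rational $O(x^{-4})$ expression for $a=b$. In both cases it is $O(x^{-3})$. Positivity of $h'$ for large $x$, combined with $h'(0+)<0$, then rules out global monotonicity.

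\textbf{Case $a<b$.} I would compute
\[
h-a=\frac{f-aS}{S}.
\]
In $f-aS$ the exponential contributions cancel, leaving
\[
(1-w)e^{-bx}\bigl(b^4x^3/6-aP_3(bx)\bigr)\sim(1-w)e^{-bx}\,b^3(b-a)\,x^3/6.
\]
Meanwhile $S\sim we^{-ax}$, which gives \eqref{eq:exptail}. The eventual decrease follows by differentiating this exact exponential-polynomial expression in the same way as above. The leading term of $(h-a)'$ is $-(b-a)$ times the leading term of $h-a$, and is therefore negative.
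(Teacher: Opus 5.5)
Your proposal is correct and takes essentially the same route as the paper. You get $h(0+)$ and $h'(0+)$ from $f$, $S$ and their first derivatives at the origin. For $b\le a$ you reduce the tail to the gamma hazard $h_G$ plus an exponentially small correction (polynomially small when $a=b$), and for $a<b$ you use the same explicit identity for $h-a$. Your extra details make explicit the differentiation steps the paper only asserts: the convergent $1/x$-expansion of the rational function $h_G$, and the fact that the leading term of $(h-a)'$ is $-(b-a)$ times that of $h-a$.
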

\begin{proof}
Since $S(0)=1$, $f(0)=wa$, $S'(0)=-wa$, and $f'(0)=-wa^2$, differentiating $h=f/S$ gives \eqref{eq:hzero}. If $b<a$, the gamma survival term dominates exponentially. If $b=a$, its cubic polynomial factor still dominates the exponential component. In either case, expansion of the explicit polynomial ratio in \eqref{eq:hazard} gives \eqref{eq:gammatail}; the exponentially small corrections when $b<a$ and the lower polynomial terms when $b=a$ can be differentiated to give the displayed derivative. If $a<b$, use
\[
 h-a=\frac{(1-w)e^{-bx}\{b^4x^3/6-aP_3(bx)\}}{S(x)}.
\]
The numerator's leading polynomial coefficient is $b^3(b-a)/6$, and $S(x)\sim we^{-ax}$. This proves \eqref{eq:exptail}; differentiation of this explicit rational-exponential expression shows eventual decrease. The common limit follows in all cases.
\end{proof}

\begin{corollary}\label{cor:nonmono}
For LR1, $b=\lambda/2<a=2\lambda$ for every $\lambda>0$. Consequently,
\begin{equation}
 h(0+)=\frac{2\lambda^3}{1+\lambda^2},\quad
 h'(0+)=-\frac{4\lambda^4}{(1+\lambda^2)^2},\quad
 h(\infty)=\frac\lambda2.
\end{equation}
The hazard is initially decreasing and eventually increasing for every positive parameter; hence it is never globally monotone. The tail expansion is $h(x)=\lambda/2-3/x+O(x^{-2})$.

\end{corollary}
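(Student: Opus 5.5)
The corollary is Theorem~\ref{thm:hazard} specialised to LR1, so the plan is to substitute $a=2\lambda$, $b=\lambda/2$ and $w=\lambda^2/(1+\lambda^2)$ into its conclusions.

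\textbf{Rate ordering.} For every $\lambda>0$ we have $b=\lambda/2<2\lambda=a$. Hence the case $b\le a$ of Theorem~\ref{thm:hazard} applies throughout the parameter space, and the exponential-tail case never arises.

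\textbf{The three displayed quantities.} From \eqref{eq:hzero}:
\begin{itemize}
\item $h(0+)=wa=2\lambda^3/(1+\lambda^2)$, which agrees with $f(0+;\lambda)$ in \eqref{eq:origin} because $S(0)=1$.
\item Since $1-w=1/(1+\lambda^2)$,
\[
h'(0+)=-w(1-w)a^2=-\frac{\lambda^2}{1+\lambda^2}\cdot\frac{1}{1+\lambda^2}\cdot 4\lambda^2=-\frac{4\lambda^4}{(1+\lambda^2)^2}<0 .
\]
\item From \eqref{eq:htail}, $h(\infty)=\min(a,b)=\lambda/2$.
\end{itemize}
Substituting $b=\lambda/2$ into \eqref{eq:gammatail} gives the tail expansion $h(x)=\lambda/2-3/x+O(x^{-2})$.

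\textbf{Nonmonotonicity.} Because $h'(0+)<0$ and $h'$ is continuous on $(0,\infty)$ up to the right limit at $0$, there is an $\epsilon>0$ with $h'<0$ on $(0,\epsilon)$. So $h$ is strictly decreasing near the origin.

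From \eqref{eq:gammatail}, $h'(x)=3/x^2+O(x^{-3})$. Hence there is an $M$ with $h'(x)>0$ for $x>M$, so $h$ is strictly increasing eventually. A function that strictly decreases on one interval and strictly increases on another cannot be monotone on $(0,\infty)$. This gives the claim for every $\lambda>0$.

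\textbf{Main obstacle.} The only delicate point is the derivative expansion in \eqref{eq:gammatail}. It requires differentiating an asymptotic expansion, which Theorem~\ref{thm:hazard} justifies through the explicit exponential-polynomial form of $h$. Here that form can be checked directly. Write $h=N/D$ with
\[
N=\tfrac{b^4x^3}{6}+\delta_1(x),\qquad D=P_3(bx)+\delta_2(x),
\]
where $\delta_1,\delta_2$ are $w/(1-w)$ times powers of $a$ multiplied by $e^{-(a-b)x}$, with $a-b=3\lambda/2>0$. These terms and their derivatives are exponentially small. The ratio of polynomials $\tfrac{b^4x^3}{6}\big/P_3(bx)$ has derivative $3/x^2+O(x^{-3})$ by elementary expansion. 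So the asymptotic formula for $h'$ holds rigorously.

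Since the corollary claims only "initially decreasing and eventually increasing", not a unique minimum, no analysis of intermediate turning points is needed.
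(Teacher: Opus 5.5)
Your proposal is correct and takes the same route as the paper: the paper's proof simply substitutes $a=2\lambda$, $b=\lambda/2$ and $w=\lambda^2/(1+\lambda^2)$ into Theorem~\ref{thm:hazard}, using that $b<a$ always puts LR1 in the gamma-tail case. Your extra steps make explicit what that theorem's proof only asserts: the continuity argument that turns $h'(0+)<0$ into strict decrease near the origin, and the direct check that the exponentially small terms $\frac{w}{1-w}a^k e^{-(a-b)x}$ leave $h'(x)=3/x^2+O(x^{-3})$ intact.
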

\begin{proof}
Substitute the LR1 rates into Theorem~\ref{thm:hazard}.
\end{proof}
The corollary does not assert exactly one hazard minimum. Counting all hazard turning points requires additional analysis; the density criterion in Theorem~\ref{thm:mode} is a different result.

\begin{figure}[htbp]\centering
\includegraphics[width=\textwidth]{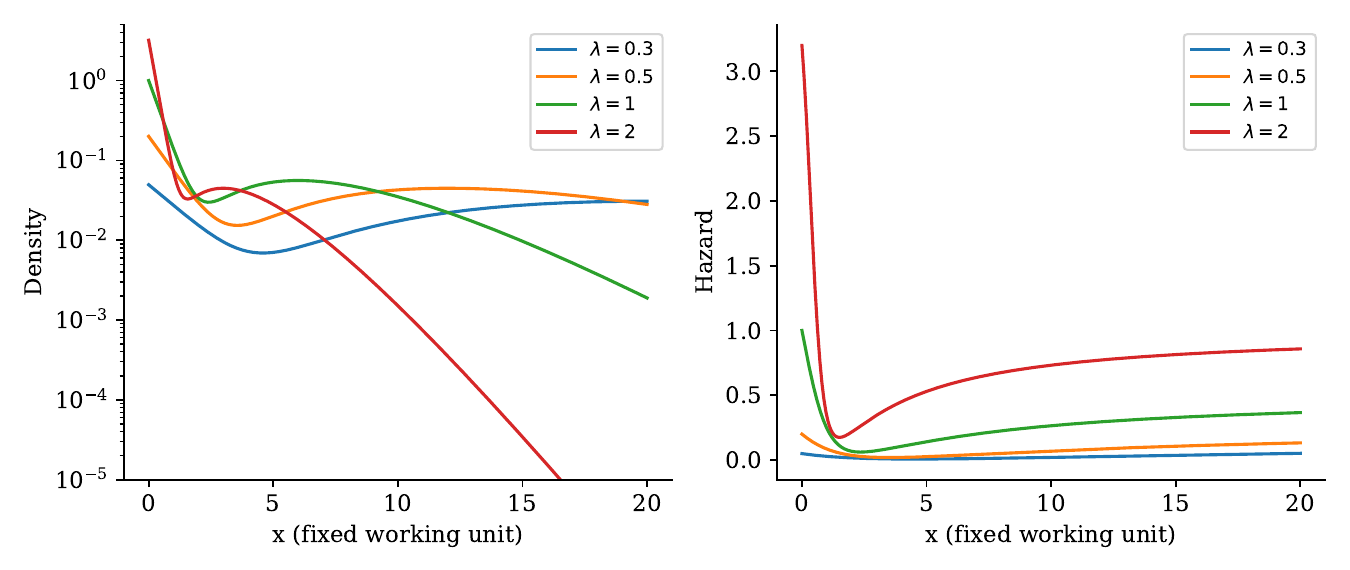}
\caption{LR1 density (left, logarithmic vertical scale) and hazard (right) curves for $\lambda=0.3,0.5,1,2$. The finite plotting range illustrates behavior but does not replace the tail results of Theorem~\ref{thm:hazard}.}\label{fig:shapes}
\end{figure}

\section{Distributional summaries and transforms}\label{sec:moments}
\begin{proposition}[Raw moments and transforms]\label{prop:moments}
For every real $r>-1$,
\begin{equation}\label{eq:moments}
 \mu'_r=\E(X^r)=w\frac{\Gamma(r+1)}{a^r}
 +(1-w)\frac{\Gamma(r+4)}{6b^r}.
\end{equation}
The moment-generating function is
\begin{equation}\label{eq:mgf}
 M(t)=w\frac{a}{a-t}+(1-w)\left(\frac{b}{b-t}\right)^4,
 \qquad t<\min(a,b).
\end{equation}
The Laplace transform is $M(-s)$ for $s\geq0$, and the characteristic function is obtained by replacing $t$ by $it$ in \eqref{eq:mgf}.
\end{proposition}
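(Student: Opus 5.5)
The plan is to condition on the latent allocation $Z$ in \eqref{eq:latent} and apply linearity of expectation. For any measurable $\phi\ge0$, or any $\phi$ integrable under both components, we have $\E\phi(X)=w\,\E\{\phi(X)\mid Z=1\}+(1-w)\,\E\{\phi(X)\mid Z=0\}$. This also follows by integrating $\phi$ against the mixture density \eqref{eq:generalpdf} term by term. Every statement in Proposition~\ref{prop:moments} therefore reduces to a standard computation for $\Exp(a)$ and $\Gam(4,b)$. The identity holds whenever both component expectations are finite, so the stated ranges of $r$ and $t$ must be checked against both components.

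\emph{Raw moments.} For the exponential component we substitute $u=ax$:
\[
\int_0^\infty x^r a e^{-ax}\,\dd x=a^{-r}\int_0^\infty u^r e^{-u}\,\dd u=\Gamma(r+1)/a^r .
\]
This integral converges at $0$ exactly when $r>-1$. For the gamma component, the same substitution $u=bx$ gives
\[
\frac{b^4}{6}\int_0^\infty x^{r+3}e^{-bx}\,\dd x=\frac{\Gamma(r+4)}{6b^r},
\]
which converges for $r>-4$. The binding condition $r>-1$ therefore comes from the exponential component, since $f(0+)=wa>0$ by \eqref{eq:origin}. Because the integrands are nonnegative, Tonelli justifies splitting the integral, and combining the two pieces gives \eqref{eq:moments}.

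\emph{Transforms.} For real $t$, we have $\int_0^\infty e^{tx}ae^{-ax}\,\dd x=a/(a-t)$ when $t<a$. Likewise, $\frac{b^4}{6}\int_0^\infty x^3e^{-(b-t)x}\,\dd x=b^4/(b-t)^4$ when $t<b$, using $\int_0^\infty x^3e^{-cx}\,\dd x=6/c^4$. Both integrals are finite exactly when $t<\min(a,b)$, which gives \eqref{eq:mgf}. The Laplace transform $\E e^{-sX}$ for $s\ge0$ is the special case $t=-s$. For the characteristic function, $|e^{itx}|=1$ makes both component integrals absolutely convergent for every real $t$. The computations with $c=a-it$ or $c=b-it$ are the same, since $\mathrm{Re}\,c>0$ and $\int_0^\infty x^k e^{-cx}\,\dd x=k!/c^{k+1}$ holds on that half-plane; one can verify this by analytic continuation or directly by integration by parts. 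Replacing $t$ by $it$ in \eqref{eq:mgf} therefore gives $\E e^{itX}$.

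The main step needing care is this complex-argument justification for the characteristic function, and I would handle it with the integration-by-parts identity. For LR1 specifically, $\min(a,b)=b=\lambda/2$, so the moment-generating function exists for $t<\lambda/2$.
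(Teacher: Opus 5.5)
Your proposal is correct and takes the same route as the paper. You integrate against each component density with the substitutions $u=ax$ and $u=bx$, observe that the exponential piece forces $r>-1$ (the gamma piece needs only $r>-4$), and require $t$ to lie below both rates for the moment-generating function. Your justification of the characteristic function via $\int_0^\infty x^k e^{-cx}\,\dd x=k!/c^{k+1}$ for $\mathrm{Re}\,c>0$ adds a detail the paper leaves implicit, without changing the argument.
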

\begin{proof}
Integrate $x^r$ against the two component densities and substitute $y=ax$ and $y=bx$. The exponential integral requires $r>-1$, which also ensures finiteness of the gamma integral. Integration of $e^{tx}$ gives \eqref{eq:mgf}; because both weights are positive, finiteness requires $t$ to be smaller than both rates.
\end{proof}

Writing $\mu=\mu'_1$, the first four raw moments are
\begin{align}
 \mu&=\frac{w}{a}+\frac{4(1-w)}{b},\label{eq:mean}\\
 \mu'_2&=\frac{2w}{a^2}+\frac{20(1-w)}{b^2},\\
 \mu'_3&=\frac{6w}{a^3}+\frac{120(1-w)}{b^3},\\
 \mu'_4&=\frac{24w}{a^4}+\frac{840(1-w)}{b^4}.
\end{align}
In particular, the law of total variance gives the interpretable decomposition
\begin{equation}\label{eq:variance}
 \sigma^2=\frac{w}{a^2}+\frac{4(1-w)}{b^2}
 +w(1-w)\left(\frac1a-\frac4b\right)^2.
\end{equation}
Within-component variation gives the first two terms; the final term accounts for separation of the component means. Standardized summaries then follow as
\begin{align}
 \mathrm{CV}&=\sigma/\mu,\\
 \gamma_1&=(\mu'_3-3\mu\mu'_2+2\mu^3)/\sigma^3,\\
 \gamma_2&=(\mu'_4-4\mu\mu'_3+6\mu^2\mu'_2-3\mu^4)/\sigma^4-3.
\end{align}
These expressions determine the dispersion and standardized shape measures of LR1.

For LR1, substitution of the rates yields the fully parameterized raw-moment formula
\begin{equation}\label{eq:specificmoments}
 \mu'_r=\frac{2^{-r}\lambda^{2-r}\Gamma(r+1)+2^r\lambda^{-r}\Gamma(r+4)/6}{1+\lambda^2},\qquad r>-1.
\end{equation}
In particular,
\begin{equation}\label{eq:specificmeans}
 \mu(\lambda)=\frac{\lambda^2+16}{2\lambda(1+\lambda^2)},\qquad \mu'_2=\frac{1/2+80/\lambda^2}{1+\lambda^2}.
\end{equation}
The variance can equivalently be computed as $\sigma^2=\mu'_2-\mu^2$.
\begin{proposition}[Monotonicity of the mean]\label{prop:meanmono}
The LR1 mean is continuous and strictly decreasing from infinity to zero on $(0,\infty)$.
\end{proposition}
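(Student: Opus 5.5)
We work directly with the closed form \eqref{eq:specificmeans}, $\mu(\lambda)=(\lambda^2+16)/\{2\lambda(1+\lambda^2)\}$. It is a ratio of polynomials whose denominator $2\lambda(1+\lambda^2)$ is strictly positive on $(0,\infty)$, so continuity is immediate. The limits follow by inspection. As $\lambda\to0^+$, the numerator tends to $16$ and the denominator to $0^+$, so $\mu\to\infty$. As $\lambda\to\infty$, $\mu(\lambda)\sim\lambda^2/(2\lambda^3)=1/(2\lambda)\to0$.

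For strict decrease, we differentiate $\log\mu$ rather than $\mu$ itself, which keeps the signs transparent:
\[
 \frac{\dd}{\dd\lambda}\log\mu(\lambda)=\frac{2\lambda}{\lambda^2+16}-\frac1\lambda-\frac{2\lambda}{1+\lambda^2}.
\]
Since $\lambda^2+16>1+\lambda^2$, the first term is strictly smaller than the third. Hence the derivative is less than $-1/\lambda<0$ for every $\lambda>0$. Equivalently, $\mu(\lambda)=\frac12\cdot\frac{\lambda^2+16}{\lambda^2+1}\cdot\frac1\lambda$ is a product of two positive functions: $(\lambda^2+16)/(\lambda^2+1)=1+15/(\lambda^2+1)$ is strictly decreasing, and $1/\lambda$ is strictly decreasing. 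A product of positive strictly decreasing functions is strictly decreasing. We present this factorization as the main argument because it avoids any calculus.

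Combining continuity, strict monotonicity and the two limits, the intermediate value theorem shows that $\mu$ is a bijection of $(0,\infty)$ onto $(0,\infty)$. This is the form needed later for the unique moment estimate.

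No step is a genuine obstacle. The only point requiring care is recognising the factorization $1+15/(1+\lambda^2)$; a naive quotient-rule computation gives a numerator $-(\lambda^4+47\lambda^2+16)$ up to positive factors, which is also clearly negative but less transparent.
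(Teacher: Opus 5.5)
Your proof is correct, but your main argument takes a different route from the paper's. The paper applies the quotient rule to \eqref{eq:specificmeans} and displays $\mu'(\lambda)=-(\lambda^4+47\lambda^2+16)/\{2\lambda^2(1+\lambda^2)^2\}$, whose sign is evident, and reads the endpoint limits off the closed form. You instead write $\mu(\lambda)=\frac{1}{2\lambda}\{1+15/(1+\lambda^2)\}$, which is exactly $\mu=a^{-1}\{w+16(1-w)\}$: a common time scale $1/a$ times a weighted average of the component mean ratios $1$ and $16$. This makes the monotonicity structural rather than computational. The time scale shrinks while the weight shifts toward the shorter-lived exponential component. The argument carries over unchanged to any rates proportional to $\lambda$ with a nondecreasing exponential weight and a gamma mean at least the exponential mean.

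What the paper's explicit derivative buys is a formula for $\mu'(\lambda_0)$ that goes straight into the delta-method variance in \eqref{eq:momasymptotic}. That variance also requires $\mu'(\lambda_0)\neq0$. Strict monotonicity obtained from the factorization alone does not guarantee a nonvanishing derivative. So keep your log-derivative bound as part of the argument rather than as an aside: it gives $\mu'(\lambda)<-\mu(\lambda)/\lambda<0$ and covers that point.
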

\begin{proof}
Differentiating gives
\begin{equation}
 \frac{\dd\mu(\lambda)}{\dd\lambda}=-\frac{\lambda^4+47\lambda^2+16}{2\lambda^2(1+\lambda^2)^2}.
\end{equation}
The displayed derivative is strictly negative for every positive parameter. The endpoint limits follow directly from \eqref{eq:specificmeans}.
\end{proof}

\begin{figure}[htbp]\centering
\includegraphics[width=.95\textwidth]{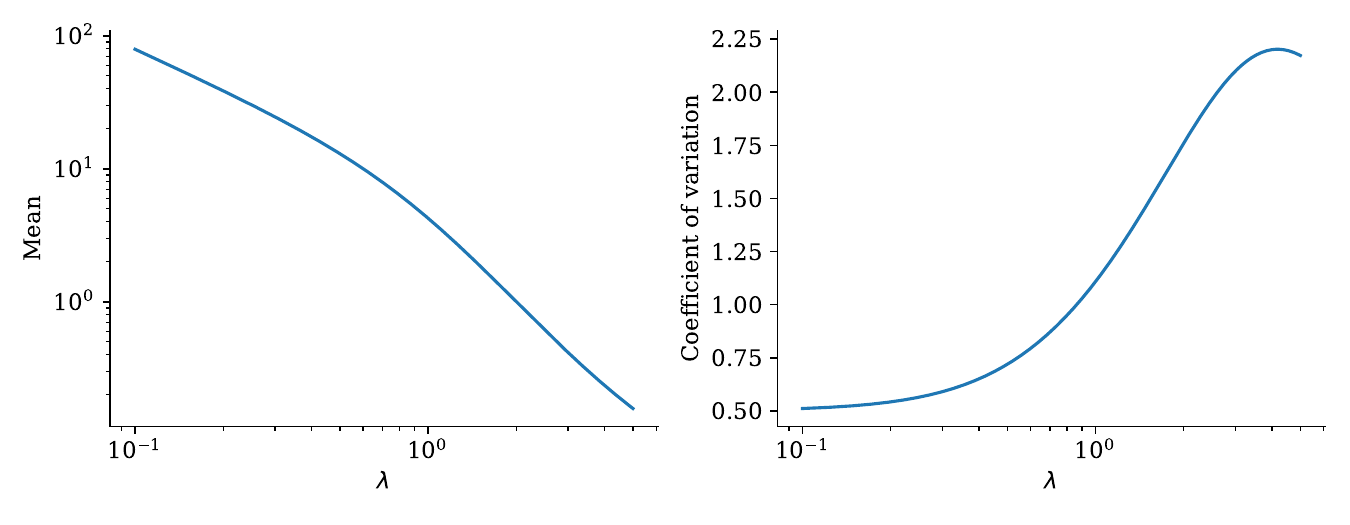}
\caption{Mean and coefficient of variation for LR1. These are direct evaluations of the analytic moment formulas in a fixed working time unit.}\label{fig:moments}
\end{figure}

\subsection{Partial moments and ordered lifetimes}
Let $\gamma(\nu,z)=\int_0^z u^{\nu-1}e^{-u}\dd u$ and $\Gamma(\nu,z)=\int_z^\infty u^{\nu-1}e^{-u}\dd u$ denote the lower and upper incomplete gamma functions. For $r>-1$,
\begin{align}
 T_r(x)&=\E[X^r\mathbf1\{X\leq x\}]
 =w a^{-r}\gamma(r+1,ax)+(1-w)\frac{b^{-r}}6\gamma(r+4,bx),\label{eq:lower}\\
 U_r(x)&=\E[X^r\mathbf1\{X>x\}]
 =w a^{-r}\Gamma(r+1,ax)+(1-w)\frac{b^{-r}}6\Gamma(r+4,bx).\label{eq:upper}
\end{align}
Because $f(x;\lambda)>0$, every $p\in(0,1)$ has a unique quantile $Q(p;\lambda)$. It can be found by solving $S(Q;\lambda)=1-p$ using a bracketed root finder. There is no need to assume an elementary inverse CDF. For a numerically stable bracket, start at zero, choose any positive upper value, and double it until the survival probability is below $1-p$.

For independent observations of size $n$, the $k$th order statistic has density
\begin{equation}
 f_{k:n}(x)=\frac{n!}{(k-1)!(n-k)!}
 F(x)^{k-1}S(x)^{n-k}f(x).
\end{equation}
For example, the minimum has survival $S(x)^n$ and hazard $n h(x)$, corresponding to an independent series system. These identities require no additional distributional assumptions beyond independence.

The Lorenz curve, when this measure is appropriate for a positive quantity, is $L(p)=T_1(Q(p))/\mu$. The mean absolute deviation about the mean is
\begin{equation}
 \E|X-\mu|=2\{\mu F(\mu)-T_1(\mu)\}.
\end{equation}
Both expressions are consequences of the truncated first moment, rather than separate special-purpose derivations.

\section{Remaining lifetime and stress--strength calculations}\label{sec:reliability}
\subsection{Mean residual life and mean inactivity time}
The mean residual life at age $x$ is $m(x)=\E(X-x\mid X>x)$. Integrating \eqref{eq:survival} term by term gives
\begin{equation}\label{eq:mrl}
 m(x)=\frac{w e^{-ax}/a+(1-w)e^{-bx}B_3(bx)/b}{S(x)},
 \quad B_3(y)=4+3y+y^2+\frac{y^3}{6}.
\end{equation}
Equivalently, $m(x)=U_1(x)/S(x)-x$. The mean inactivity time is
\begin{equation}
 \widetilde m(x)=\E(x-X\mid X\leq x)=x-\frac{T_1(x)}{F(x)},\qquad x>0.
\end{equation}
These standard lifetime summaries complement the hazard; see \citet{cox1984} for the survival-analysis setting. At zero, $m(0)=\mu$, and its limiting value is $1/\min(a,b)$. The latter follows directly from \eqref{eq:mrl} and the dominant survival component.

\subsection{Stress--strength reliability}
Let $X$ denote strength and $Y$ stress, with independent mixture distributions. Both follow LR1 and may have different parameter values. Write $(w_X,a_X,b_X)$ and $(w_Y,a_Y,b_Y)$ for their component specifications. Define reliability as $\mathcal R=\Prob(Y<X)$; with this convention,
\begin{equation}\label{eq:reliability}
 \mathcal R=w_Xw_Y R_{EE}+w_X(1-w_Y)R_{EG}
 +(1-w_X)w_YR_{GE}+(1-w_X)(1-w_Y)R_{GG},
\end{equation}
where the first subscript refers to the strength component and
\begin{align}
 R_{EE}&=\frac{a_Y}{a_X+a_Y},\\
 R_{EG}&=\left(\frac{b_Y}{b_Y+a_X}\right)^4,\\
 R_{GE}&=1-\left(\frac{b_X}{b_X+a_Y}\right)^4,\\
 R_{GG}&=1-\sum_{k=0}^3\frac{\Gamma(k+4)}{6k!}
 \frac{b_X^4b_Y^k}{(b_X+b_Y)^{k+4}}.\label{eq:RGG}
\end{align}
To verify these expressions, condition on both latent components. For $R_{EG}$, integrate $S_{\Exp(a_X)}(y)$ against the gamma stress density, giving its Laplace transform. For $R_{GE}$, integrate the exponential stress CDF against the gamma strength density. For $R_{GG}$, integrate $S_Y(x)$ against the strength density and subtract from one. Equal distributions yield $\mathcal R=1/2$, as required by independence and continuity.

\section{Estimating the LR1 parameter}\label{sec:inference}
\subsection{Inverting the mean}
Suppose $X_1,\ldots,X_n$ are independent and identically distributed from LR1. The moment estimator $\widehat\lambda_{\rm MM}$ solves
\begin{equation}\label{eq:momest}
 \mu(\widehat\lambda_{\rm MM})=\overline X.
\end{equation}
Proposition~\ref{prop:meanmono} proves existence and uniqueness for every strictly positive sample mean. Multiplication by the positive denominator gives the model-specific equation
\begin{equation}
 2\overline X\lambda^3-\lambda^2+2\overline X\lambda-16=0.
\end{equation}
Despite the polynomial representation, a monotone root solver on $\eta=\log\lambda$ is more convenient than explicit radical formulas. The law of large numbers and continuity of the inverse mean prove consistency. Since all moments are finite, the central limit theorem and delta method give
\begin{equation}\label{eq:momasymptotic}
 \sqrt n(\widehat\lambda_{\rm MM}-\lambda_0)
 \xrightarrow{d}N\left(0,\frac{\sigma^2(\lambda_0)}{\{\mu'(\lambda_0)\}^2}\right).
\end{equation}
The variance in \eqref{eq:momasymptotic} is readily estimated by plugging the moment estimate into the analytic expressions.

\subsection{Likelihood construction and curvature}
Define the weighted component densities
\begin{equation}
 A_i=wae^{-aX_i},\qquad B_i=(1-w)b^4X_i^3e^{-bX_i}/6.
\end{equation}
The log likelihood and posterior component probability are
\begin{equation}\label{eq:likelihood}
 \ell(\lambda)=\sum_{i=1}^n\log(A_i+B_i),\qquad
 r_i(\lambda)=\frac{A_i}{A_i+B_i}.
\end{equation}
Let primes on rates denote derivatives with respect to $\lambda$. The weighted-component scores are
\begin{align}
 s_{Ei}&=\frac2\lambda-\frac{2\lambda}{1+\lambda^2}+\frac{a'}a-a'X_i,\label{eq:sE}\\
 s_{Gi}&=-\frac{2\lambda}{1+\lambda^2}+4\frac{b'}b-b'X_i.\label{eq:sG}
\end{align}
Then
\begin{equation}\label{eq:score}
 U(\lambda)=\ell'(\lambda)=\sum_{i=1}^n\{r_i s_{Ei}+(1-r_i)s_{Gi}\}.
\end{equation}
Writing $c(\lambda)=-2(1-\lambda^2)/(1+\lambda^2)^2$, the score derivatives are
\begin{align}
 s'_{Ei}&=-\frac2{\lambda^2}+c(\lambda)+\frac{a''}a-\left(\frac{a'}a\right)^2-a''X_i,\\
 s'_{Gi}&=c(\lambda)+4\left\{\frac{b''}b-\left(\frac{b'}b\right)^2\right\}-b''X_i.
\end{align}
Differentiating the posterior allocation gives $r_i'=r_i(1-r_i)(s_{Ei}-s_{Gi})$. Hence the negative likelihood curvature is
\begin{equation}\label{eq:info}
 J_n(\lambda)=-\sum_{i=1}^n\left\{r_i s'_{Ei}+(1-r_i)s'_{Gi}
 +r_i(1-r_i)(s_{Ei}-s_{Gi})^2\right\}.
\end{equation}
The final variance term is essential; omitting it treats unobserved component assignments as known. For LR1, the rate derivatives are
\begin{equation}
 a'=2,\quad a''=0,\quad b'=1/2,\quad b''=0.
\end{equation}
The component scores therefore simplify to
\begin{equation}
  s_{Ei}=\frac3\lambda-\frac{2\lambda}{1+\lambda^2}-2X_i,\qquad s_{Gi}=\frac4\lambda-\frac{2\lambda}{1+\lambda^2}-\frac{X_i}{2}.
\end{equation}
These explicit expressions and \eqref{eq:info} are used in the numerical implementation.

For a sample containing only positive lifetimes, the log likelihood tends to $-\infty$ at both ends of the positive parameter space. At the lower endpoint, each weighted component density tends to zero with a positive power of $\lambda$. At the upper endpoint, the exponential decay dominates the polynomial prefactors. Continuity therefore ensures at least one interior likelihood maximizer. Uniqueness is not asserted.

For numerical estimation, evaluate the likelihood using the log-sum-exp identity
\[
 \log(A+B)=m+\log\{e^{\log A-m}+e^{\log B-m}\},\quad
 m=\max(\log A,\log B).
\]
Optimize in $\eta=\log\lambda$, search a grid for candidate peaks, refine each candidate, and retain the largest likelihood. Inspect boundaries and expand the range if they are competitive. A numerical grid is a practical safeguard, not a proof of finding every finite-sample stationary point.

\subsection{Latent allocations and constrained EM}
The latent representation \eqref{eq:latent} yields an expectation--maximization (EM) implementation in the sense of \citet{dempster1977}. Given $\lambda^{(q)}$, calculate $r_i^{(q)}=r_i(\lambda^{(q)})$ in the E step. Up to terms independent of $\lambda$, the M-step objective is
\begin{align}\label{eq:EM}
 Q(\lambda\mid\lambda^{(q)})={}&\sum_i r_i^{(q)}
 \{\log w(\lambda)+\log a(\lambda)-a(\lambda)X_i\}\notag\\
 &+\sum_i(1-r_i^{(q)})
 \{\log(1-w(\lambda))+4\log b(\lambda)-b(\lambda)X_i\}.
\end{align}
Maximize this scalar function subject to the LR1 rate specification. Updating the weight and two rates independently would fit a different, unconstrained model. Exact maximization of $Q$ makes the observed likelihood nondecreasing, but EM can converge to a local stationary point. Multiple starts and comparison with the direct likelihood search remain useful. The simulation below uses direct likelihood optimization, not EM.

\subsection{Asymptotic uncertainty and confidence limits}
\begin{theorem}[Regular likelihood inference]\label{thm:asymptotic}
Let $\lambda_0$ be an interior point of a fixed compact parameter interval $K\subset(0,\infty)$. For independent observations from LR1, a global maximizer over $K$ is consistent. Moreover,
\begin{equation}\label{eq:MLEasymptotic}
 \sqrt n(\widehat\lambda_{\rm ML}-\lambda_0)
 \xrightarrow{d}N(0,I(\lambda_0)^{-1}),\qquad
 I(\lambda)=\int_0^\infty\{\partial_\lambda\log f(x;\lambda)\}^2f(x;\lambda)\dd x,
\end{equation}
where $0<I(\lambda_0)<\infty$.
\end{theorem}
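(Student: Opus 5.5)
We verify the classical Wald and Cramér conditions for a one-dimensional regular parametric family on the compact interval $K$.

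\emph{Consistency.} We use Wald's argument on $K$. The density $f(x;\lambda)$ in \eqref{eq:specificpdf} is jointly continuous and strictly positive on $(0,\infty)\times K$. Identifiability is Proposition~\ref{prop:ident}, so the Kullback--Leibler divergence $\E_{\lambda_0}\log\{f(X;\lambda_0)/f(X;\lambda)\}$ is strictly positive for $\lambda\neq\lambda_0$. We also need an integrable envelope for $\sup_{\lambda\in K}|\log f(x;\lambda)|$.

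Let $K=[\lambda_1,\lambda_2]$. The upper bound $f(x;\lambda)\le 2\lambda_2^3+\lambda_2^4x^3e^{-\lambda_1x/2}/96$ is bounded in $x$. For the lower bound, $\log f(x;\lambda)\ge\log\{2\lambda_1^3/(1+\lambda_2^2)\}-2\lambda_2x$, using only the exponential term. Hence $\sup_K|\log f|\le C_1+C_2x$, which is integrable because all moments are finite by Proposition~\ref{prop:moments}.

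The uniform strong law of large numbers on the compact set $K$ then gives $\sup_K|n^{-1}\ell(\lambda)-\E_{\lambda_0}\log f(X;\lambda)|\to0$ almost surely. The limit function is continuous and uniquely maximized at $\lambda_0$, so any global maximizer over $K$ is consistent.

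\emph{Asymptotic normality.} Because $\lambda_0$ is interior, consistency implies $\widehat\lambda_{\rm ML}$ lies in the interior with probability tending to one, so $U(\widehat\lambda_{\rm ML})=0$. We Taylor expand \eqref{eq:score} around $\lambda_0$:
\[
0=n^{-1/2}U(\lambda_0)-\{n^{-1}J_n(\tilde\lambda)\}\sqrt n(\widehat\lambda_{\rm ML}-\lambda_0).
\]
The required inputs are the following.

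(i) The score has mean zero with variance $I(\lambda_0)$. The score is $r s_E+(1-r)s_G$ with $|s_E|,|s_G|\le C(1+x)$ uniformly on $K$ and $0\le r\le1$, so it is square integrable. Differentiation under the integral sign is justified by the dominating function $C(1+x)\sup_K f(x;\lambda)$, which is integrable. Hence $I(\lambda_0)<\infty$, and the central limit theorem gives $n^{-1/2}U(\lambda_0)\to N(0,I(\lambda_0))$.

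(ii) The second derivative is locally dominated. From \eqref{eq:info}, the terms $s'_{Ei},s'_{Gi}$ are bounded on $K$ since $a''=b''=0$, and $r(1-r)(s_E-s_G)^2\le C(1+x)^2$. A uniform law of large numbers on a neighbourhood of $\lambda_0$, combined with consistency, then gives $n^{-1}J_n(\tilde\lambda)\to I(\lambda_0)$. The information identity $\E[-\partial^2_\lambda\log f]=I$ follows by the same domination argument.

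\emph{Positivity of $I(\lambda_0)$.} This is the step that needs genuine model-specific care, and I expect it to be the main obstacle. Now $I(\lambda_0)=0$ would force $\partial_\lambda\log f(x;\lambda_0)=0$ for almost every $x$, and by continuity for every $x>0$. To rule this out, examine $x\to0^+$. There $r\to1$, so the score tends to
\[
s_E(0)=\frac{3}{\lambda}-\frac{2\lambda}{1+\lambda^2}.
\]
This equals $\partial_\lambda\log f(0+;\lambda)$, which is strictly positive by \eqref{eq:origin}; indeed $3(1+\lambda^2)-2\lambda^2=3+\lambda^2>0$. So the score is nonzero near the origin, which gives $I(\lambda_0)>0$.

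Combining (i), (ii) and Slutsky's theorem yields \eqref{eq:MLEasymptotic}.
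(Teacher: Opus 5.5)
Your proposal is correct and follows essentially the same route as the paper: an integrable log-density envelope together with identifiability and the Kullback--Leibler inequality gives consistency on the compact $K$, polynomial bounds on the score and Hessian support the Taylor expansion, and $I(\lambda_0)>0$ follows from the strictly positive $\lambda$-derivative of the density at the origin. The differences are cosmetic. Your envelope $C_1+C_2x$ omits the unnecessary $|\log x|$ term in the paper's $C(1+x+|\log x|)$. You also take the limit of the score $\partial_\lambda\log f$ as $x\to0^+$ directly from the explicit component scores, where the paper takes the limit of $\partial_\lambda f$.
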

\begin{proof}
On $K$, weights and rates are bounded away from zero and infinity. The score and its first two parameter derivatives are bounded by polynomials in $x$, since posterior component probabilities lie between zero and one and rate derivatives are bounded on $K$. Component tails admit a uniform exponential envelope. The log density admits an integrable envelope of the form $C(1+x+|\log x|)$; in fact the positive exponential component controls the behavior near zero. These bounds give uniform laws of large numbers and justify differentiating under the integral. Identifiability (Proposition~\ref{prop:ident}) makes the expected log likelihood uniquely maximized at $\lambda_0$ by the Kullback--Leibler inequality. Compactness then proves consistency. Finiteness of $I$ follows from the score bound and finite moments. If $I=0$, $\partial_\lambda f(x;\lambda_0)$ vanishes almost everywhere and, by continuity, everywhere on the positive support. Its continuous limit at zero would then vanish, contradicting the strictly positive derivative of \eqref{eq:origin}. Thus $I>0$. A score Taylor expansion at the consistent maximizer, the central limit theorem, and convergence of the normalized Hessian give \eqref{eq:MLEasymptotic}; these are the standard regular likelihood arguments described by \citet{vandervaart1998}.
\end{proof}

These asymptotic conclusions apply to an interior true parameter on the stated compact interval. They give neither a boundary-uniform approximation nor a certificate that a numerical local solution is globally optimal. At a fitted solution with $J_n(\widehat\lambda)>0$, an approximate $100(1-\alpha)\%$ interval that respects positivity is
\begin{equation}\label{eq:wald}
 \left[\exp\left\{\log\widehat\lambda-
 \frac{z_{1-\alpha/2}}{\widehat\lambda\sqrt{J_n(\widehat\lambda)}}\right\},
 \exp\left\{\log\widehat\lambda+
 \frac{z_{1-\alpha/2}}{\widehat\lambda\sqrt{J_n(\widehat\lambda)}}\right\}\right].
\end{equation}
Here $J_n$ is total information, so no additional factor of $n$ is inserted. A likelihood-based alternative is the set
\begin{equation}
 \{\lambda>0:2[\ell(\widehat\lambda)-\ell(\lambda)]\leq\chi^2_{1,1-\alpha}\}.
\end{equation}
This cutoff concerns inference within LR1 under the stated regularity assumptions. Parametric bootstrap intervals can also be obtained by generating from the fitted model and refitting with the same search procedure.

For a differentiable functional $g(\lambda)$, including a survival probability or the mean, the delta-method variance is $\{g'(\lambda_0)\}^2/[nI(\lambda_0)]$. A quantile derivative follows implicitly from $F(Q(p;\lambda);\lambda)=p$:
\begin{equation}
 \partial_\lambda Q(p;\lambda)=-\frac{\partial_\lambda F(Q(p;\lambda);\lambda)}{f(Q(p;\lambda);\lambda)}.
\end{equation}

\subsection{Spacing criteria and censored observations}
For distinct, uncensored observations ordered as $X_{(1)}<\cdots<X_{(n)}$, let $F(X_{(0)})=0$ and $F(X_{(n+1)})=1$. The maximum product of spacings estimator maximizes
\begin{equation}\label{eq:spacing}
 \sum_{i=1}^{n+1}\log\{F(X_{(i)};\lambda)-F(X_{(i-1)};\lambda)\}.
\end{equation}
The general approach is due to \citet{cheng1983}. Stable differences of survival probabilities are useful in the upper tail. Ties caused by rounding produce zero spacings and require a grouped-data or tie-adjusted method; \eqref{eq:spacing} should not be applied unchanged. No numerical comparison of spacing estimation is claimed in this paper.

For independent right censoring, observe $T_i=\min(X_i,C_i)$ and $\delta_i=\mathbf1\{X_i\leq C_i\}$. Assuming the censoring mechanism is independent of $X_i$ and its distribution has no shared unknown parameter, the part of the log likelihood depending on $\lambda$ is
\begin{equation}\label{eq:censored}
 \ell_c(\lambda)=\sum_i\{\delta_i\log f(T_i;\lambda)+(1-\delta_i)\log S(T_i;\lambda)\}.
\end{equation}
The gamma-component posterior probability for a censored observation uses its survival contribution, not its density contribution. In particular, the exponential posterior probability is $we^{-aT_i}/S(T_i)$ for $\delta_i=0$. The complete-data formulation must accordingly include survival terms for censored records. Equation~\eqref{eq:EM}, written for complete lifetimes, is not the censored-data M-step. The uncensored moment estimator is likewise inappropriate when applied naively to $T_i$. Censored estimation is an extension described by its likelihood here and is not evaluated in the simulation.

\section{Finite-sample assessment}\label{sec:numerical}
\subsection{Simulation design and numerical safeguards}
We report a Monte Carlo experiment for LR1 with $\lambda\in\{0.3,0.5,1,2\}$ and $n\in\{50,200\}$. Each of the eight settings uses 500 independently generated samples, for 4,000 datasets in total. These results are the LR1 subset of the combined experiment from which this standalone analysis was developed, not a new independent experiment. The original random stream uses NumPy's \texttt{default\_rng} with seed 20261009. The accompanying script preserves the stream position used for this model before generating its samples. Indicators use $w(\lambda)$, and lifetimes are drawn from the appropriate exponential or gamma component. No observed lifetime dataset was used.

For each sample, we computed the moment estimator by a bracketed root search on the log-parameter scale, and the likelihood estimator by evaluating 121 equally spaced log-parameter values on $[-9,9]$, refining all grid-local maxima with bounded scalar optimization. The retained candidate had the largest likelihood. Endpoint comparisons checked that a boundary was not preferred. This is a numerical approximation to global maximization with an explicitly documented search range.

We report empirical bias and root mean squared error (RMSE),
\begin{equation}
 \widehat{\mathrm{Bias}}=\frac1B\sum_{r=1}^B(\widehat\lambda_r-\lambda),\qquad
 \widehat{\mathrm{RMSE}}=\left\{\frac1B\sum_{r=1}^B(\widehat\lambda_r-\lambda)^2\right\}^{1/2},
 \quad B=500.
\end{equation}
Coverage refers to the nominal 95\% log-scale Wald interval in \eqref{eq:wald}. The machine-readable output also records optimizer failures, nonpositive observed-information counts, and Monte Carlo standard errors of coverage. An unavailable interval is recorded separately rather than silently treated as a valid result. With 500 replications, the standard error of a coverage estimate near 0.95 is about 0.010. Modest coverage differences should therefore be interpreted cautiously.

Before simulation, numerical quadrature verified normalization and the first two raw moments for LR1 at the four parameter values. Finite differences checked the analytic score and Hessian at three positive $x$ values per setting; differentiating the survival function numerically also checked the density. These checks support the implementation but do not substitute for the proofs.

\begin{table}[htbp]\centering\small
\caption{Monte Carlo bias, RMSE, and 95\% log-scale Wald coverage, based on 500 replications per setting. ML denotes maximum likelihood and MM the moment estimator.}\label{tab:simulation}
\begin{tabular}{crrrrrr}
\toprule
$\lambda$ & $n$ & Bias (ML) & RMSE (ML) & Bias (MM) & RMSE (MM) & Coverage \\
\midrule
0.3 & 50 & 0.0010 & 0.0222 & 0.0009 & 0.0226 & 0.940 \\
0.3 & 200 & 0.0017 & 0.0110 & 0.0017 & 0.0111 & 0.942 \\
0.5 & 50 & 0.0032 & 0.0375 & 0.0032 & 0.0383 & 0.948 \\
0.5 & 200 & -0.0002 & 0.0184 & 0.0003 & 0.0194 & 0.938 \\
1 & 50 & 0.0073 & 0.0790 & 0.0072 & 0.0861 & 0.954 \\
1 & 200 & -0.0007 & 0.0392 & 0.0014 & 0.0410 & 0.948 \\
2 & 50 & 0.0424 & 0.2244 & 0.0671 & 0.2710 & 0.948 \\
2 & 200 & 0.0154 & 0.0999 & 0.0154 & 0.1138 & 0.940 \\
\bottomrule
\end{tabular}

\end{table}

\subsection{Results}
Table~\ref{tab:simulation} shows that increasing $n$ from 50 to 200 reduces RMSE at every reported parameter value for both estimators. Maximum likelihood has lower empirical RMSE than moment estimation in 8 of the eight settings. At $\lambda=2$ and $n=50$, ML and MM RMSE are 0.2244 and 0.2710; at $n=200$ they are 0.0999 and 0.1138. These are within-model estimator comparisons, rather than evidence of superiority over competing lifetime distributions.

Observed 95\% interval coverage ranges from 0.938 to 0.954. The lowest reported coverage occurs at $\lambda=0.5$ and $n=200$. With 500 replications per setting, Monte Carlo uncertainty is approximately one percentage point near nominal coverage. All 4,000 likelihood fits in this article returned successful optimization results, no boundary comparison required expansion, and all fitted observed-information values were positive. These finite-sample results support approximate inference within the reported design; they do not guarantee nominal coverage outside it.

\subsection{Interpretation and empirical scope}
The experiment isolates estimation error when LR1 is the data-generating distribution. It does not compare predictive performance against the Lindley, Weibull, lognormal, or unrestricted exponential--gamma mixture distributions, and it is not evidence of empirical superiority. A future application should specify the data provenance, working time unit, censoring and rounding mechanisms, and sample exclusions before fitting. Positive observations alone are insufficient to validate a lifetime interpretation.

For observed data, a reasonable comparison would fit a constant-hazard exponential model, the Lindley model, flexible two-parameter lifetime models, and an unrestricted exponential--gamma mixture with shape four. For a regular candidate model with $k$ fitted parameters, the Akaike criterion is $\mathrm{AIC}=-2\ell(\widehat\lambda)+2k$ \citep{akaike1974}. The fixed-unit LR1 model has $k=1$, so its criterion is $-2\ell(\widehat\lambda)+2$. If an additional time scale is estimated, it must count as an additional parameter. These are comparison principles, not fitted results.

Probability plots and held-out log likelihood can supplement information criteria. A goodness-of-fit statistic evaluated after estimating parameters does not generally have the same null distribution as when parameters are known. For example, a parametric-bootstrap CDF test should simulate from the fitted model, refit each bootstrap sample, and recompute the statistic. A real-data results table is deliberately omitted because no application dataset has been analyzed.

\section{Implications and limitations}\label{sec:discussion}
LR1 is a one-dimensional restriction of an exponential--gamma mixture with gamma shape four. The common representation yields explicit distributional and reliability quantities, while the constraints permit unique moment estimation and identifiable likelihood inference for the specified model. The density-turning-point threshold and the hazard results give more informative shape statements than visual inspection alone. In particular, an increasing component hazard is not enough to justify describing LR1 as an increasing-hazard distribution.

The fixed ratio of component rates leads to an explicit density-shape threshold in $\lambda$, while the hazard is nonmonotone throughout the parameter space. This is a useful distinction: density stationary points may disappear as the parameter changes even though the hazard retains an initial decrease and an eventual increase.

Several restrictions matter in applications. First, the component rates and the mixture weight cannot vary independently. This can make a fit parsimonious, but can also force an unrealistic relationship between the frequency and duration of the two mechanisms. Second, the fixed shape four is a modeling choice, not a finding established by observed data. Third, the current numerical study is limited to complete, independent, correctly specified samples and a finite parameter grid. It does not establish robustness to censoring, dependence, rounding, or misspecification.

The time-unit issue deserves particular attention. If $Y=cX$, then its component rates become $a/c$ and $b/c$, while the mixture weight remains $w(\lambda)$. Because $w$ is injective, representing $Y$ by the same one-parameter model requires the same $\lambda$; that would leave both rates unchanged. Thus LR1 is not closed under arbitrary time-unit changes. The linked weight and rates are interpreted numerically in that working unit; the parameter does not independently adjust the time scale.

A scale extension resolves this practical restriction. If $X$ denotes a main-model lifetime in the reference unit, set $Y=\tau X$ with $\tau>0$. Then
\begin{equation}
 f_Y(y;\lambda,\tau)=\tau^{-1}f_X(y/\tau;\lambda),\qquad
 S_Y(y;\lambda,\tau)=S_X(y/\tau;\lambda).
\end{equation}
This is a two-parameter model with rates $a(\lambda)/\tau$ and $b(\lambda)/\tau$. Under $Y^*=cY$, only $\tau$ changes to $c\tau$. Estimation and identifiability of this extension require their own analysis and are not covered by the one-parameter inference theorem.

Further work could evaluate this scale extension, relax the fixed gamma shape, and assess censoring and empirical performance with appropriately documented datasets. A full classification of hazard turning points would also strengthen the shape analysis. The present results establish mathematical and computational foundations for this particular restriction, while keeping the distinction between analytic properties and demonstrated application performance explicit.

\section*{Data and code availability}
All numerical results use generated data. The accompanying Python script \texttt{reproduce.py} regenerates both figures, the Monte Carlo table, the numerical-check report, and software-version metadata. The source bundle includes the CSV output and environment record. No confidential or proprietary data are involved.

\appendix
\section{Sampling and computational implementation}
For LR1 at a specified parameter, generate independent random variables $U\sim\operatorname{Uniform}(0,1)$, $E\sim\Exp(a)$, and $G\sim\Gam(4,b)$, and return
\begin{equation}
 X=E\mathbf1\{U<w\}+G\mathbf1\{U\geq w\}.
\end{equation}
This is exact mixture sampling and requires no inverse-CDF approximation. The gamma variable can alternatively be obtained as the sum of four independent exponentials of rate $b$. The provided implementation draws all three vectors for each replication so that the order of random-number consumption is fixed.

The Python dependencies are NumPy, SciPy, and Matplotlib. The actual versions used for the reported experiment are recorded in \texttt{environment.json}. Running \texttt{python3 anc/reproduce.py --replications 500} regenerates the outputs. The full likelihood and observed-information formulas are implemented rather than approximating information with a numerically unstable second difference. Numerical finite differences are used only for independent implementation checks.

\bibliographystyle{plainnat}
\bibliography{references}
\end{document}